\documentclass[preprint]{imsart}

\RequirePackage{amsthm,amsmath,amsfonts,amssymb}
\RequirePackage[authoryear]{natbib}
\RequirePackage[colorlinks,citecolor=blue,urlcolor=blue]{hyperref}
\RequirePackage{graphicx}
\startlocaldefs
\newtheorem{example}{Example}
\newtheorem{theorem}{Theorem}
\newtheorem{definition}{Definition}
\newtheorem{lemma}{Lemma}
\newtheorem{corollary}{Corollary}

\endlocaldefs

\begin{document}
\begin{frontmatter}
\title{Minimally Discrete and Minimally Randomized p-Values}
\runtitle{p-Values for Discrete Data}
\author{Joshua D. Habiger and Pratyadipta Rudra} \\ \vspace{.1in} 
{\small Department of Statistics\\ Oklahoma State University}

\begin{abstract}
In meta analysis, multiple hypothesis testing and many other methods, p-values are utilized as inputs and assumed to be uniformly distributed over the unit interval under the null hypotheses.  If data used to generate p-values have discrete distributions then either natural, mid- or randomized p-values are typically utilized. Natural and mid-p-values can allow for valid, albeit conservative, downstream methods since under the null hypothesis they are dominated by uniform distributions in the stochastic and convex order, respectively.  Randomized p-values need not lead to conservative procedures since they permit a uniform distributions under the null hypotheses through the generation of independent auxiliary variates.  However, the auxiliary variates necessarily add variation to procedures.  This manuscript introduces and studies ``minimally discrete'' (MD) natural p-values,  MD mid-p-values and ``minimally randomized'' (MR) p-values. It is shown that MD p-values dominate their non-MD counterparts in the stochastic and convex order, and hence lead to less conservative, yet still valid, downstream methods.  Likewise, MR p-values dominate their non-MR counterparts in that they are still uniformly distributed under the null hypotheses, but the added variation attributable to the independently generated auxiliary variate is smaller. It is anticipated that results here will facilitate the construction of new meta-analysis and multiple testing methods via more efficient p-value construction, and facilitate theoretical study of existing and new methods by establishing gold standards for addressing the unavoidable detrimental ``discreteness effect''.   
\end{abstract}


\begin{keyword}
\kwd{Discrete Data}
\kwd{p-Value}
\kwd{Mid-p-Value}
\kwd{Natural p-Value}
\kwd{Randomized p-value}
\kwd{Convex Order}
\kwd{Multiple Testing}
\kwd{Meta-analysis}
\end{keyword}
\end{frontmatter}

\section{Introduction}
In statistical hypothesis testing the objective is to decide if a null hypothesis $H_0$, which specifies a distribution or statistical model for random data $X$, should be retained or rejected in favor of alternative hypothesis $H_1$ based on realization $X = x$. The decision can be denoted by $\delta(x)\in\{0,1\}$, where $\delta(x) =$ 0 (1) means that $H_0$ is retained (rejected).  The Neyman-Pearson (NP) paradigm stipulates that $\delta$ should  maximize power subject to the constraint that the type 1 error rate is not more than some prespecified level $\alpha$. 
Formally, $\delta$ should maximize $E_1[\delta(X)]$ subject to the constraint that $E_0[\delta(X)]\leq \alpha$, where $E_j$ denotes an expectation taken under $H_j$.   For example, in the simple versus simple setting  $H_0:X\sim P_0$ is tested against $H_1:X\sim P_1$, and the most powerful level $\alpha$ decision rule is $\delta(x) = I(\lambda(x)\geq c)$ where $I(\cdot)$ is the indicator function, $\lambda(x) = \frac{p_1(x)}{p_0(x)}$ is the likelihood ratio statistic, $p_j$ is the probability mass function for $P_j$ under $H_j$, and $c$ is chosen so that $E_0[\delta(X)] = \alpha$ \citep{Neyman1933}. The corresponding $p$-value can be computed via $p(x) = \Pr_0\{\lambda(X)\geq \lambda(x)\}$.  

However, if $X$ has a discrete distribution, it is not generally possible to define $\delta(X)$ such that $E_0[\delta(X)]=\alpha$, and several different types of $p$-values can be considered.  Let us consider a simple example to facilitate discussion.  
\begin{example}\label{BernoulliExample}
Let $X = (X_1, X_2, ..., X_5)$ be iid Bernoulli($\theta$) random variables with joint probability mass function $p_\theta(x) = \theta^{T(\mathbf{x})}(1-\theta)^{5-T(\mathbf{x})}$ where $T(x) = \sum_ix_i$, and consider testing null hypothesis $H_0:\theta = 0.5$ against alternative hypothesis $H_1:\theta = 0.8$.
\end{example}
\noindent Observe that $\lambda(x) = p_{1}(x)/p_{0}(x)$ is increasing in $T(x)$ so the likelihood ratio test rejects $H_0$ for large values of $T(x)$. However, observe $T(x)$ takes values in $\{0,1,...,5\}$ and that $\Pr_0\{T(X) \geq 5\} = 0.03125$ but $\Pr_0\{T(X)\geq 4\} = 0.1875$.  Hence, it is not possible to define a $\delta$, i.e. choose $c$, such that $E_0[\delta(X)] = \Pr_0\{T(X)\geq c\} = \alpha$ if, for example, $\alpha = 0.1$.  Randomized testing \citep{Pearson1950, Tocher1950} would be necessary. In Example \ref{BernoulliExample}, a most powerful level $0.1$ test function is 
\begin{equation}\label{BernoulliTestFunction}
\phi(x) = \left\{\begin{array}{rcl} 1 & \mbox{if} &T(x) > 4 \\
 0.44 & \mbox{if} & T(x) = 4  \\
 0 & \mbox{if}& T(x) < 4.
 \end{array}\right.
\end{equation}
where $\phi(x)$ represents the rejection probability. See, for example, \cite{Lehmann1993}. The randomized test, when implemented, can be written as a randomized decision rule $\delta(x,u) = I(u\leq \phi(x))$ where $u$ is a realization from a standard uniform distribution. The corresponding randomized $p$-value can be computed $p(x,u) =  \Pr_{0}\{T(X)>T(x)\}+ u\Pr_{0}\{T(X) = T(x)\}$.  See, for example, \cite{Pena2011, Habiger2011}.   In practice, we may also report the corresponding natural $p$-value: $p(x,1)=\Pr_{0}\{T(X)\geq T(x)\}$ (cf. \cite{Casella2002}), mid-p-value \citep{Lan61}: $p(x,1/2) = \Pr_{0}\{T(X)>T(x)\}+ 1/2\Pr_{0}\{T(X) = T(x)\}$, or even the fuzzy $p$-value \citep{GeyMee05}, which in this example amounts to reporting the interval $[p(x,0), p(x,1)]$.  The curious reader is referred to \cite{Agr07, Wells2010} for comparison of approaches in traditional single null hypothesis testing settings. More recent research studies the impact of the employing mid-p-values, natural p-values, or randomized in downstream methods. 


First let us consider randomized $p$-value literature. \cite{Habiger2011, Dickhaus2013} provided conditions that ensure randomized $p$-values are uniformly distributed under the null hypotheses so that the \cite{Benjamini1995} multiple hypothesis testing procedure, and other adaptive procedures (eg. \cite{Storey2004}) for false discovery rate (FDR) control are valid.  \cite{Ochieng2024} combined randomization methods in \cite{Hoang2022,Hoang2022b}  with aforementioned randomization methods to develop improved 2 step procedures for multiple testing of composite null hypotheses with discrete data. \cite{Habiger2015, Dai2019} demonstrated that randomized p-values can be used to develop more powerful multiple testing procedures even if (adjusted) randomized $p$-values are not ultimately reported, say as in the fuzzy BH procedure in \cite{Kulinskaya2009}. Of course, randomized $p$-values may be more variable than their non-randomized counterparts, which can be manifested as decreasing replication probability. That is, the probability of realizing the same randomized p-value in a replicated experiment is 0 if $u$ is regenerated. 

Other research is aimed at establishing conservative (or less conservative) behavior for mid and natural p-values.  For example, \cite{Doehler2018, Chen2020, Chen2020b} provide upper bounds for the FDR when the BH procedure is applied to mid-p-values.   In meta analysis, \cite{Rubin-Delanchy2018} provide bounds for the distribution of Fisher's test statistic $T= -2 \sum_i \log(P_i)$ under the null hypothesis when $P_i$ represents a mid-p-value. The idea builds upon \cite{Hwang2001}, where it was shown that the mid-p-value is stochastically less than or equal to a uniform distribution in the convex order \citep{Shaked2007} under the null hypothesis.  \cite{Wang2024} also used stochastic ordering to calibrate mid-p-values into valid p-values, valid E-values \citep{Vovk2021} and Bayes Factors. Of course, established inequalities need not be sharp. 

Ideally inequalities for downstream methods that utilize mid-p-values or natural p-values would be as sharp as possible and, if opting for randomized $p$-values, generated uniform variates would add minimal variation.  This motivates our minimality principle, which ensures that the aforementioned detrimental impacts when dealing with the ``discreteness'' of the data are minimized.  Specifically, we develop \textit{minimally discrete} (MD) mid-p-values,  MD natural $p$-values, and \textit{minimally randomized} (MR) p-values that dominate their traditional mid-$p$-value, natural $p$-value and randomized $p$-value counterparts, respectively. Specifically, it is shown that inequalities based on the usual stochastic order for natural $p$-values and inequalities based on the convex order for mid-$p$-values are sharpened.  Likewise, variation attributable to independeintly generated auxiliary variates for randomized $p$-values is smaller.  

Our implication that ``less discrete'' and ``less randomized'' p-values are available is bold and therefore warrants a brief discussion. Reconsider Example \ref{BernoulliExample} and let us compare \begin{equation}\label{AnotherMPTest}
\phi^*(x) = \left\{\begin{array}{rcl} 1 & \mbox{if} & x\in \{ (1,1,1,1,1), (1,1,1,1,0),  (0,1,1,1,1)\}  \\ 
 0.2 & \mbox{if} & x=(1,1,0,1,1)  \\ 
 0 & & \textrm{otherwise}
 \end{array}\right.
\end{equation}
to the most powerful test $\phi(x)$ in \eqref{BernoulliTestFunction}. A curious reader may verify that the type 1 error rates are $E_0[\phi(X)] = E_0[\phi^*(X)] = 0.1$ and that the powers are $E_1[\phi(X)] = E_1[\phi^*(X)] = 0.5079$. Clearly $\phi^*$ and $\phi$ are two distinct most powerful level $0.1$ tests. In particular, both tests automatically reject $H_0$ when $T(x)>4$ and automatically retain $H_0$ when $T(x)<4$.  The difference is that $\phi(x)$ rejects $H_0$ with probability 0.44 for all five $x\in [T(x)=4]$ while $\phi^*(x)$ only randomly rejects $H_0$ for one $x\in [T(x)=4]$.  It automatically rejects $H_0$ if $x = (1,1,1,1,0)$ or $x= (0,1,1,1,1)$ and automatically retains $H_0$ when $x=(1,0,1,1,1)$ or $x = (1,1,1,0,1)$.  The key realization is that a most powerful test function can be defined ``arbitrarily on the set $[\lambda(x) = k]$'' \citep{Lehmann1997} as long as the resulting test function has expectation equal to $\alpha$ under $H_0$.  While mathematical statistics texts (cf. \cite{Cox1974, Lehmann1997}) acknowledge this point, further development is cited as beyond scope: in testing a single null hypothesis $\phi$ is sometimes viewed as a mathematical mechanism for illustrating the NP Lemma rather than a practical method.  However, as noted previously, p-values derived from test functions like $\phi$ are increasingly utilized as inputs in a downstream method.  In this setting, further research is warranted.        


\section{Setup}
We first provide a basic mathematical framework for test functions, decision functions and $p$-values and  recall / establish some fundamental results for mid-$p$-values, natural $p$-values and randomized p-values.  Let $X = (X_1, X_2, ..., X_n)\in \mathcal{X}$ be a collection of independent and identically distributed random variables with distribution $\Pr_\theta$ for $\theta\in \Theta$.  Assume that the support $\mathcal{X}$ is countable so that $\Pr_\theta$ is discrete.  Denote the probability mass function of $X$ under $\Pr_\theta$ by $p_\theta(x) = \Pr_\theta\{X = x\}$ and an expectation taken under $X\sim \Pr_\theta$ by $E_\theta$. We shall focus on testing a simple null hypothesis $H_0:\theta = \theta_0$ against an alternative hypothesis, say $H_1:\theta=\theta_1$ for $\theta_1\neq \theta_0$ or $H_1:\theta\in\Theta\setminus{\theta_0}$. For ease of exposition, we sometimes denote $E_{\theta_0}$ by $E_0$ and write $E_1$ for an expectation taken under $H_1$.  We shall adopt similar notations for probabilities and probability mass functions under $H_0$ or $H_1$. 

A size $\alpha$ test function is denoted $\phi_\alpha: \mathcal{X}\rightarrow [0,1]$ and satisfies $E_0[\phi_\alpha(X)] = \alpha$. Here, the size $\alpha$ also indexes a test function in the collection $\{\phi_\alpha(X), \alpha\in [0,1]\}$.  Assume $\phi_\alpha(x)$ is nondecreasing and right continuous in $\alpha$ for every $x$ and that $\phi_\alpha$ is unbiased in the that $E_\theta[\phi_\alpha(X)]\geq \alpha$ for all $\alpha\in[0,1]$ and $\theta\in \Theta$.  In practice, for a specified $\alpha$, $\phi_\alpha(x)$ takes on values 0 or 1 for most realizations of $x\in \mathcal{X}$, in which case $H_0$ is automatically retained or rejected, respectively.  However, it is necessary to allow $\phi_\alpha(x)\in (0,1)$ for some $x$ to ensure that $E_0[\phi_\alpha(X)]=\alpha$, as in Example \ref{BernoulliTestFunction}.  If $\phi_\alpha(x)\in(0,1)$ and a decision to reject or retain $H_0$ is mandated or a $p$-value is to be computed, then further notation/action is required.

A \textit{randomized decision function} corresponding to $\phi_\alpha$ is defined as $\delta_\alpha(X,U) = I(U\leq \phi_\alpha(X))$, where $U$ is an independently generated uniform random variable over the unit interval, and where $\delta = 0~(1)$ means that $H_0$ is retained (rejected).  Observe that if $\phi_\alpha(x) = 0~(1)$, then $H_0$ is retained (rejected) with probability 1, but that if $\phi_\alpha(x)$ is in $(0,1)$ then $H_0$ is rejected with probability $\phi_\alpha(x)$ since $\Pr(U\leq \phi_\alpha(x)) = \phi_\alpha(x).$ Observe that
\begin{equation}\label{size alpha}
E_0[\delta_\alpha(X,U)] = E_0\left\{E_0[I(U\leq \phi_\alpha(X))|X]\right\} = E_0[\phi_\alpha(X)] = \alpha,
\end{equation} where we adopt shorthand notation $E[\cdot|X]$ for $E[\cdot|\sigma(X)]$. We therefore refer to $\delta_\alpha(X,U)$ as the size $\alpha$ decision function corresponding to $\phi_\alpha(X)$.  
A \textit{non-randomized decision function} fixes a $u\in [0,1]$, say $u = 1/2$ or $u = 1$, and is denoted $\delta_\alpha(X,u)$.  Taking $u = 1$ and observing that $\delta_\alpha(x,1) = I(1\leq \phi_\alpha(x)) \leq \phi_\alpha(x)$, we have
\begin{equation}\label{level alpha}
E_0[\delta_\alpha(X,1)] \leq E_0[\phi_\alpha(X)] = \alpha.
\end{equation}
Due to the above inequality, $\delta_\alpha(X,1)$ is referred to as a level $\alpha$ decision function corresponding to $\phi_\alpha$.  If $u<1$ is specified, then $\delta_\alpha(X,u)$ is still non-randomized but may or may not be a level $\alpha$ decision function because the inequality in \eqref{level alpha} need not be satisfied.  See, for example, Theorem 2 in \cite{Habiger2015}.

Above we see that $u$ can be specified or generated, thereby rendering $\delta$ non-randomized or randomized, respectively.  Likewise, $p$-values corresponding to $\delta$ can be randomized or non-randomized. This leads us to define the \textit{generalized} $p$-value for $\delta_\alpha(x,u)$ by
\begin{equation}\label{p-value}
P(x,u) = \inf\{\alpha:\delta_\alpha(x,u) = 1\}.
\end{equation}
When $U$ represents a random variable, $P(X,U)$ is referred to as a randomized $p$-value statistic.  If $u$ is fixed or specified then $P(X,u)$ is a non-randomized $p$-value statistic.  The generalized $p$-value can be interpreted in the usual way: $P(x,u)$ is the smallest $\alpha$ allowing for $H_0$ to be rejected when $X=x$ and $U=u$.  See, for example, \cite{Casella2002}, pg. 412.  We shall sometimes refer to $P(X,U)$ as a p-value rather than a $p$-value statistic for brevity.  

This $p$-value is well defined in the sense that $\delta_\alpha(X,U) = I(P(X,U)\leq \alpha)$ with probability 1.  See, for example, Theorem 2.3 in \cite{Habiger2011}. In fact $\delta_\alpha(X,u) = I(P(X,u)\leq \alpha)$ with probability 1 for any fixed $u\in [0,1]$, including $u=1/2$ or $u = 1$.  See Theorem 1 in \cite{Habiger2015}.  Consequently,
\begin{equation}\label{valid p-value}
\textrm{Pr}_0\{P(X,U)\leq \alpha\} = \alpha \mbox{ and } \textrm{Pr}_0\{P(X,1)\leq \alpha\} = E_0[\delta_\alpha(X,1)] \leq \alpha.
\end{equation}
In summary, the \textit{randomized} $p$-value statistic $P(X,U)$ is uniformly distributed under $H_0$ while the \textit{natural} $p$-value statistic $P(X,1)$ is stochastically greater than or equal to a uniform random variate under $H_0$. It is worth reiterating that all equalities and inequalities in \eqref{valid p-value} are true under our working assumptions: $\phi_\alpha(x)$ is unbiased, $\phi_\alpha(x)$ is non-decreasing and right-continuous in $\alpha$, and $p(x,u)$ is computed via \eqref{p-value}. An additional condition that allows us to compare mid-p-values to a uniform distribution is formalized next.

While we still have $I(P(X,1/2)\leq \alpha) = \delta_\alpha(X,1/2)$ with probability 1, we cannot provide bounds for $P(X,1/2)$ in the usual stochastic order under $H_0$ since $E_0[\delta_\alpha(X,1/2)]$ can be less than or greater than $\alpha$.  However, \cite{Hwang2001} showed that $E_0[h(P(X,1/2))]\leq E[h(U)]$ for any convex function $h$ for $p$-values arising from the analysis of $2\times2$ contingency tables.  That is $P(X,1/2)$ is stochastically less than or equal to $U$ in the convex order \citep{Shaked2007}.  Note that sometimes this stochastic ordering is written $P(X,1/2) \leq_{cx} U$ under $H_0$ and $P(X,1/2)$ is referred to as \textit{subuniform} under $H_0$. The convex ordering result is easily generalized beyond \cite{Rubin-Delanchy2018, Hwang2001} to any $p$-value that is linear in $u$, i.e. any p-value that can be written $p(x,u) = a(x) + u b(x)$ since this allows for $E_0[p(x,U)] = p(x, E_0[U]) = p(x,1/2)$. This is formally stated below. 
\begin{lemma}\label{sub-uniform-lemma}
Let $P(x,u)$ be a generalized $p$-value defined as in \eqref{p-value} with $\Pr_0\{P(X,U)\leq t\}=t$ for $t\in[0,1]$, and with $P(X,U) \stackrel{d}{=} a(X) + U b(X)$ for some $a(X)$ and $b(X)$.  Then $P(X,1/2)$ is sub-uniform under $H$.   
\end{lemma}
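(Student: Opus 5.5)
The argument is a conditional Jensen's inequality. Write $U$ for the independently generated uniform variate and take, as the statement permits, the representation $P(X,u) = a(X) + u\,b(X)$ holding for every fixed $u$; this linearity in $u$ is the key input. Then
\[
E_0[P(X,U)\mid X] = a(X) + E[U]\,b(X) = a(X) + \tfrac12 b(X) = P(X,1/2),
\]
where we used that $U$ is independent of $X$ with $E[U]=1/2$. So the mid-$p$-value statistic is a conditional expectation of the randomized $p$-value statistic given $X$.

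Next, fix a convex function $h$ on $[0,1]$ for which the relevant expectations exist; $h$ is bounded below by an affine function, so the expectations are well defined in $(-\infty,\infty]$. Conditional Jensen's inequality gives
\[
h(P(X,1/2)) = h\bigl(E_0[P(X,U)\mid X]\bigr) \le E_0[h(P(X,U))\mid X]
\]
almost surely. Taking $E_0$ of both sides and applying the tower property yields
\[
E_0[h(P(X,1/2))] \le E_0[h(P(X,U))].
\]
By hypothesis $\Pr_0\{P(X,U)\le t\}=t$ for all $t\in[0,1]$, so $P(X,U)$ is uniform under $H_0$ and $E_0[h(P(X,U))]=E[h(U')]$ for $U'\sim$ Uniform$(0,1)$. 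Hence $P(X,1/2)\le_{cx} U'$, which is the sub-uniformity claimed.

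The main obstacle is justifying the first display. If the hypothesis is read only as an equality in distribution, $P(X,U)\stackrel{d}{=} a(X)+Ub(X)$, the argument must be run with the pair $(X,U)$ rather than the marginal law alone. I would handle this by noting that for generalized $p$-values defined via \eqref{p-value} the linear form holds pointwise in $u$, as in the randomized $p$-value $p(x,u)=\Pr_0\{T(X)>T(x)\}+u\Pr_0\{T(X)=T(x)\}$. Alternatively, one can apply the Jensen step directly to $a(X)+Ub(X)$, whose conditional mean given $X$ is $a(X)+b(X)/2$, and then identify $a(X)+b(X)/2$ with $P(X,1/2)$ by evaluating the representation at $u=1/2$. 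A minor technical point is the integrability of $h$ at the endpoints $0$ and $1$. This is handled by the affine lower bound together with monotone convergence, or by restricting to convex $h$ with $E[h(U')]<\infty$, as is standard for the convex order.
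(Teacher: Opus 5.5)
Your proposal is correct and is essentially the paper's proof: uniformity of $P(X,U)$ under $H_0$, the tower property, conditional Jensen, and the identity $E_0\{P(X,U)\mid X\} = a(X) + \tfrac12 b(X) = P(X,1/2)$, which uses linearity in $u$ and the independence of $U$ and $X$. Like you, the paper reads the hypothesis $P(X,U)\stackrel{d}{=}a(X)+Ub(X)$ as a representation holding pointwise in $u$, which is exactly what that last identification requires.
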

\begin{proof}
We establish that
\begin{eqnarray*}\label{sub-uniform}
E[h(U)]&=& E_0 [h(P(X,U))]\\
&=&E_0[E_0\{h(P(X,U))|X\}]\\ 
&\geq& E_0[h(E_0\{P(X,U)|X\})] \\
&=& E_0[h(P(X,1/2))]
\end{eqnarray*}
for $h$ convex so that the result follows by the definition of sub-uniform. To see this, observe the first equality is by the supposition that $P(X,U) \stackrel{d}{=}U$ under $H_0$.  The second equality utilizes the law of iterated expectation.  The inequality is due to Jensen's inequality.  The last equality follows from the linearity condition and that $E_0[U]=1/2$.  In particular, using linearity and noting that $X$ and $U$ are independent gives 
$$E_0\{P(X,U)|X\} = E_0\{a(X) + U b(X)|X\} = a(X) + E_0[U] b(X) = P(X,1/2).$$
\end{proof}
\noindent Above we see that $p(x,1/2) = p(x,E_0[U])$.  This establishes that the mid-p-value can be viewed as a smoothed version of the randomized $p$-value, which leads to the terminology ``expected p-value'' in \cite{Hwang2001}.        

The linearity condition is satisfied whenever test functions depend on a statistic $T(X)$, where large (or small WLOG) values of $T(x)$ are evidence against $H_0$.  This arises in most practical settings, such as when $T(X)$ is a likelihood ratio test statistic or some statistic that is monotone in the likelihood ratio. The corresponding size $\alpha$ test function, decision function, and generalized $p$-value are formally defined below.
\begin{definition}[Tests and p-values for $T$]\label{T}
Let $T:\mathcal{X}\rightarrow \Re$ be a test statistic such that large values of $T$ are evidence against $H$.  A size $\alpha$ test function based on $T$ is
$$\phi_\alpha^{T}(x) = \left\{\begin{array}{rcl} 1 & \mbox{if} &T(x) > k(\alpha) \\
 \gamma(\alpha) & \mbox{if} & T(x) = k(\alpha)  \\
 0 & \mbox{if}& T(x) < k(\alpha).
 \end{array}\right.
$$
where $\gamma(\alpha)$ and $k(\alpha)$ are chosen such that $E_0[\phi_\alpha^T(X)] = \alpha$. Its corresponding decision function is $\delta_\alpha^T(x,u) = I(u\leq \phi_\alpha^T(x))$ and generalized $p$-value is
$
P^{T}(x,u) = \mathrm{Pr}_{0}\{T(X)> T(x)\} + u \mathrm{Pr}_0\{T(X) = T(x)\}.
$
\end{definition}
Combining Lemma \ref{sub-uniform-lemma} and Definition \ref{T}, we define a large class of mid-p-value statistics that are sub-uniform under $H_0$ in the following corollary.  It is worth noting that Definition \ref{T} explicitly outlines how to construct a mid-p-value that is sub-uniform under $H_0$ as long as some test statistic $T$ is available.  This includes, but is not limited to, likelihood ratio statistics, sufficient statistics for some parametric models, and nonparametric rank-based statistics.  
\begin{corollary}
Let $P^T(x,1/2)$ be a mid-p-value defined as in Definition \ref{T}.  Then $P^T(X,1/2)$ is sub-uniform under $H_0$.
\end{corollary}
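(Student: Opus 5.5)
The plan is to apply Lemma \ref{sub-uniform-lemma} directly to $P^T$. Doing so requires three hypotheses: (i) $P^T(x,u)$ is a generalized $p$-value in the sense of \eqref{p-value} for the family $\delta^T_\alpha$; (ii) $\Pr_0\{P^T(X,U)\le t\}=t$ for $t\in[0,1]$; and (iii) $P^T(X,U)$ has the linear form $a(X)+Ub(X)$.

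Condition (iii) is immediate from Definition \ref{T}. Take $a(x)=\Pr_0\{T(X)>T(x)\}$ and $b(x)=\Pr_0\{T(X)=T(x)\}$, which do not depend on $u$. Then $P^T(X,U)=a(X)+Ub(X)$ identically, not merely in distribution.

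For (i), I would check that $\inf\{\alpha:\delta^T_\alpha(x,u)=1\}$ equals the displayed formula. Fix $x$ with $t=T(x)$. As $\alpha$ increases, the cutoff $k(\alpha)$ decreases. Because the size is exactly $\alpha$, the test with $k(\alpha)=t$ has
\[
\alpha=\Pr_0\{T>t\}+\gamma(\alpha)\Pr_0\{T=t\}.
\]
Hence $\delta^T_\alpha(x,u)=1$ exactly when either $k(\alpha)<t$, or $k(\alpha)=t$ and $u\le\gamma(\alpha)$. Solving for the smallest such $\alpha$ gives $\Pr_0\{T>t\}+u\Pr_0\{T=t\}$. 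The case $\Pr_0\{T=t\}=0$ is degenerate but harmless. To keep $\phi^T_\alpha$ nondecreasing and right continuous in $\alpha$, I would choose $k,\gamma$ by the standard parametrization.

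For (ii), I have two options. The first is to invoke \eqref{valid p-value}, which already gives uniformity for any generalized $p$-value under the working assumptions. The second is a direct check: conditional on $T(X)=t$, the variable $P^T(X,U)$ is uniform on $[F(t^+),F(t^+)+\Pr_0\{T=t\}]$, where $F(t^+)=\Pr_0\{T>t\}$. These intervals tile $[0,1]$ with lengths equal to their probabilities, so the mixture is $\mathrm{Unif}(0,1)$.

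With all three conditions in hand, Lemma \ref{sub-uniform-lemma} yields $P^T(X,1/2)\le_{cx}U$ under $H_0$. Concretely, $E_0[P^T(X,U)\mid X]=a(X)+b(X)/2=P^T(X,1/2)$, and Jensen's inequality completes the argument. The only real obstacle is the bookkeeping in (i): verifying that the infimum formula matches the displayed expression, and that the monotonicity and right-continuity assumptions hold for $\phi^T_\alpha$. I would handle this by the case analysis on $k(\alpha)$ relative to $T(x)$ described above. If that proves delicate, note that Lemma \ref{sub-uniform-lemma}'s proof uses only (ii) and (iii), so the direct uniformity check alone suffices.
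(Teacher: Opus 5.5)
Your proposal is correct and matches the paper's argument. The paper obtains the corollary by combining Lemma~\ref{sub-uniform-lemma} with the linear form $a(x)+u\,b(x)$ read off from Definition~\ref{T} and the null uniformity in \eqref{valid p-value}; your optional direct tiling check of uniformity is a self-contained supplement, and you are right that the lemma's proof needs only uniformity and linearity.
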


\section{Minimality, Minimally Discrete p-Values and Minimally Randomized p-Values}
Observe that because the support $\mathcal{X}$ of $X$ is countable, each individual point in $\mathcal{X}$ can be ranked according to the order in which it is to be included in a rejection region.  In Example \ref{BernoulliExample} the point $x = (1,1,1,1,1)$ may be ranked first, $x = (0,1,1,1,1)$ ranked second, $x = (1,1,1,1,0)$ third and so on until each of the $x\in\mathcal{X}$ have been ranked from $1$ to $32$. Here, it is important to understand that these rankings, say $R(x)$ \textit{agree} with rankings provided by the test statistic $T(x)$ in \eqref{BernoulliTestFunction} from the most powerful LR test.  That is, while neither $\lambda(x)$ nor $T(x)$ require that the point $x = (0,1,1,1,1)$ be ranked second, they allow for it. It is precisely this distinction that is studied in this manuscript.  

First let us define minimally discrete test functions and their corresponding decision functions and p-values.  Formally, a test function that depends on $X$ through some one-to-one mapping of $\mathcal{X}$ onto the positive integers (WLOG), say $R:\mathcal{X}\rightarrow \mathcal{N}$, is referred to as minimally discrete (MD).  When non-randomized decision functions and $p$-values are constructed using an MD test function, they are also referred to as MD.  Randomized decision functions and randomized $p$-values based on MD test functions are said to be minimally randomized (MR) rather than minimally discrete since, for example, the resulting $p$-value has a continuous distribution.  Decision functions and generalized $p$-values that do not specify whether $u$ is fixed or generated are referred to as M-p-values and M-decision functions whenever they are based on MD test functions. Formal definitions are below.      
\begin{definition}[MD tests, decision functions, and p-values]
\label{MD}
Let $R:\mathcal{X}\rightarrow \mathcal{N}$ be one to one, where $\mathcal{N}$ represents the positive integers, and define
$$
\phi_\alpha^{MD}(x) = \left\{\begin{array}{rcl} 1 & \mbox{if} &R(x) <k^*(\alpha) \\
 \gamma^*(\alpha) & \mbox{if} & R(x) = k^*(\alpha)  \\
 0 & \mbox{if}& R(x) > k^*{\alpha}.
 \end{array}\right.
$$
where $\gamma^*(\alpha)\in[0,1]$ and $k^*(\alpha)\in \mathcal{N}$ are such that $E_0[\phi_\alpha^{M}(X)] = \alpha$.  Then $\phi_\alpha^{MD}(x)$ is called a minimally discrete size $\alpha$ test function.  Its corresponding $M$-decision function is $\delta_\alpha^{M}(x,u) = I(u\leq \phi_\alpha^{MD}(x))$ and $M$-$p$-value is
$
P^{M}(x,u)  = \mathrm{Pr}_0\{R(X)< R(x)\} + u\mathrm{Pr}_0\{R(X)=R(x)\}.
$
\end{definition}

Of course an MD test function is not unique, just as most powerful tests are not unique when data are discrete. We shall, however, narrow our study to MD test functions that agree with some other test function, like an LRT function so that we can focus on the effect of utilizing an MD version of an established test (ex. $\phi^*$ is an MD version of $\phi$ in the Introduction). Note that in the formal definition of MD tests that agree with another test, we say that $R(X)$ agrees with a statistic $T(X)$ if $R(x)\leq R(y)$ whenever $T(x)\geq T(y)$ for all $x,y\in \mathcal{X}$. 
\begin{definition}[MD tests, decision functions, and p-values that agree with $T$]\label{MDT}
Let $\phi_\alpha^{T}(x)$ be a size-$\alpha$ test function as in Definition \ref{T} and $\phi_\alpha^{MD}(x)$ be an MD test function as in Definition \ref{MD}.  If $T(x)\geq T(y)$ implies $R(x)\leq R(y)$ for all $x,y \in \mathcal{X}$ then we say $T$ and $R$ agree, and $\phi_\alpha^{MD-T}$ is an MD test that agrees with $\phi_\alpha^T$.  $M$-decision functions and generalized $M$-$p$-values are $\delta_{\alpha}^{M-T}(x,u) = I(u\leq \phi_{\alpha}^{MD-T})$ and $P^{M-T}(x,u) = \inf\{\alpha:\delta_\alpha^{M-T}(x,u)=1\}$, respectively.
\end{definition}
\noindent Note that if $u$ is generated then $P^{M-T}(x,u)$ is called a minimally randomized $p$-value that agrees with $T$ (MR-$T$) and written $P^{MR-T}(x,u)$.  If $u$ is specified then $P^{M-T}(x,u)$ is a minimally discrete $p$-value that agrees with $T$ (MD-$T$) and written $P^{MD-T}(x,u)$. Similar terminology is used for decision functions. 

Let us pause to consider a concrete example.  Table \ref{ExampleTable} provides an illustration of MD natural $p$-values that agree with the natural $p$-values from the usual LRT in Example \ref{BernoulliTestFunction}.  
\begin{table}[ht]
\caption{\label{ExampleTable} The first 5 columns list an $x \in \mathcal{X}$ in Example \ref{BernoulliExample}.  The following columns contain $p_0(x)$, $p_1(x)$, likelihood ratio $\lambda(x)$, a ranking $R(x)$ that agrees with $\lambda(x)$, and natural MD $p$-values based on $R(x)$ computed $P^{MD-\lambda}(x,1) = \Pr_0\{R(X)\leq R(x)\}$ and natural $p$-value based on the likelihood ratio statistic $P^\lambda(x,1) = \Pr_0\{\lambda(X)\geq \lambda(x)\}$}
\begin{tabular}{cccccccccll}
$x_1$&$x_2$&$x_3$&$x_4$&$x_5$&$p_0(x)$ & $p_1(x)$ & $\lambda(x)$ & $R(x)$ & $P^{MD}(x,1)$& $P^\lambda(x,1)$\\ \hline
1 & 1 & 1& 1& 1 & 0.03125 & 0.32768 & 10.4857 & 1 & 0.03125 & 0.03125\\
0 & 1 & 1& 1& 1 & 0.03125 & 0.08192 & 2.62144 & 2 & 0.06250 & 0.1875 \\
1 & 0 & 1& 1& 1 & 0.03125 & 0.08192 & 2.62144 & 3 & 0.09375& 0.1875\\
1 & 1 & 0& 1& 1 & 0.03125 & 0.08192 & 2.62144 & 4 & 0.125& 0.1875\\
1 & 1 & 1& 0& 1 & 0.03125 & 0.08192 & 2.62144 & 5 & 0.15625& 0.1875 \\
1 & 1 & 1& 1& 0 & 0.03125 & 0.08192 & 2.62144 & 6 & 0.1875& 0.1875\\
0 & 0 & 1& 1& 1 & 0.03125 & 0.02048 & 0.65536 & 7 & 0.21875&0.5\\
1 & 0 & 0& 1& 1 & 0.03125 & 0.02048 & 0.65536 & 8 & 0.25&0.5\\ 
\end{tabular}
\end{table}
Observe that $R$ agrees with $\lambda$ since $\lambda(x)\geq\lambda(y)$ implies $R(x)\leq R(y)$ for each $x$,$y$ listed.  We further observe that $R$ is one-to-one since each $x$ is assigned a unique ranking for the support points depicted.  Hence, the resulting $p$-value $P^{MD-\lambda}(x,1)$ is minimally discrete.  Less technically, the terminology ``minimal'' emphasizes that among all possible other natural $p$-values based on the likelihood ratio test, this $p$-value has as many support points as possible (32 in this example), and hence minimizes discreteness effect.  The p-value based on the LRT is not minimal since it only has 6 support points when 32 are possible.  

Figure \ref{TheoremPlot} displays the cumulative distribution function (CDF) for the p-values above under hypotheses $H_0:\theta = 0.5$ and $H_1:\theta = 0.8$. It is easy to see that, in this example, the usual LRT p-value is dominated by its MD counterpart: the MD-LRT natural p-value is always stochastically less than or equal to the LRT p-value but still stochastically greater than or equal to a uniform distribution under the null hypothesis. It is noteworthy that we plugged $1$ in for $u$ in $P^{MD-\lambda}(x,u)$ in this example.   

\begin{figure}[ht]
\includegraphics[width=5.0in, height = 3in]{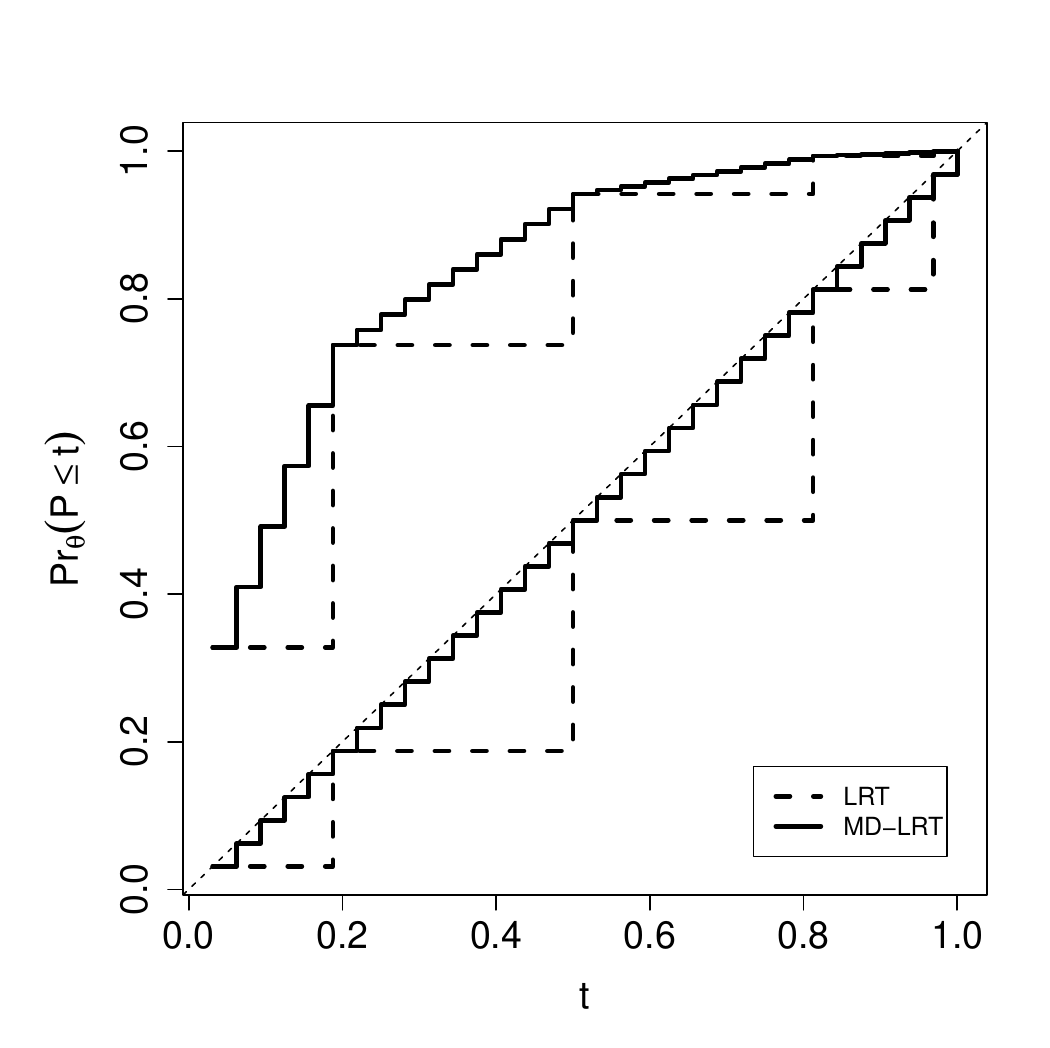}
\caption{\label{TheoremPlot} The CDF of natural p-value $\Pr_\theta\{P^{\lambda}(X,1)\leq t\}$ versus t for the LRT in Example \ref{BernoulliExample} under $H_0: \theta = 0.5$ and $H_1:\theta = 0.8$ is depicted by the dashed lines.  The CDF for its minimally discrete counterpart $\Pr_\theta\{P^{MD-\lambda}(X,1)\leq t\}$ versus t under $H_0:\theta = 0.5$ and $H_1:\theta = 0.8$ is depicted by solid lines. The dotted line represents the CDF for a uniform distribution. }
\end{figure}

We may also plug $1/2$ in for $u$ to recover an MD mid-p-value for the LRT or generate $u$ to compute a minimally randomized $p$-value $P^{MR-\lambda}(x,u)$ for the LRT. 
\begin{figure}
\includegraphics[width=5in, height = 3in]{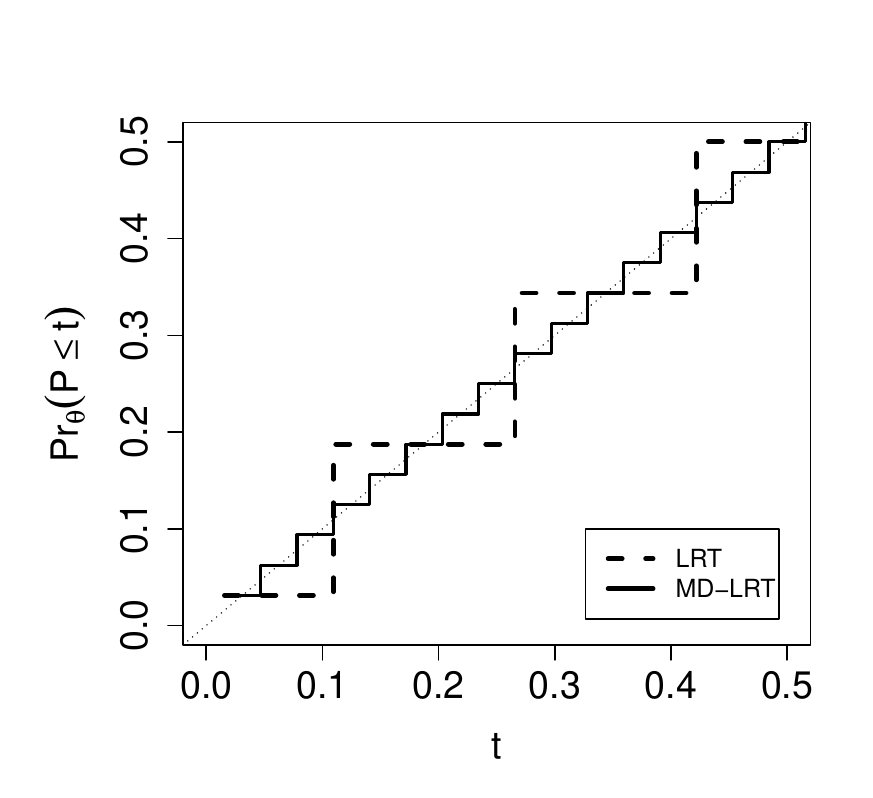}
\caption{\label{MidPplot} The CDF of the mid-p-value $\Pr_0\{P^{LRT}(X,1/2)\leq t\}$ in Example \ref{BernoulliExample} under $H_0: \theta = 0.5$ is depicted by a dashed line (zoomed in).  The CDF for its minimally discrete counterpart $\Pr_0\{P^{MD-LRT}(X,1/2)\leq t\}$ under $H_0:\theta = 0.5$ is represented by a solid line. The dotted line represents the CDF for a uniform distribution.}
\end{figure}
Figure \ref{MidPplot} depicts the distributions of the mid-p-value in Example \ref{BernoulliExample} and its MD counterpart under $H_0$. Now we see that the CDF of the MD mid-$p$-value is ``closer'' to a uniform CDF.  We also recall that both $P^\lambda(X,U)$ and $P^{MR-\lambda}(X,U)$ have uniform distributions under $H_0$ as per \eqref{valid p-value}.  The next section formally verifies that these observations are generally true.

\section{Assessment}
First let us compare MD natural p-values to their non-MD counterparts.  The Theorem below establishes that MD-natural p-values are stochastically greater than or equal to a uniform variate (i.e., valid) and stochastically less than or equal to their non-MD counterparts. This results in more powerful, yet still valid, decision rules. These validity and power results are formally stated in Claims (C1) and (C2).  The stochastic ordering claims for the corresponding $p$-value statistics are formally stated in (C3) and (C4).   

\begin{theorem}[MD-$T$ versus $T$: Natural p-values and decision functions] \label{Thm1}
Let decision functions and p-values be defined as Definitions \ref{T} and \ref{MDT} with $u=1$. The following claims are true:
\begin{enumerate}
\item[(C1)] $E_\theta[\delta_\alpha^{MD-T}(X,1)]\geq E_\theta[\delta_\alpha^T(X,1)]$  $\forall\alpha$ and $\theta\in\Theta$,\\
\item[(C2)] $E_0[\delta_\alpha^T(X,1)\leq E_0[\delta_\alpha^{MD-T}(X,1)]\leq \alpha$ $\forall \alpha$, \\
\item[(C3)] $\Pr_\theta\{P^{MD-T}(X,1)\leq t\} \geq \Pr_\theta\{P^T(X,1)\leq t\}$ $\forall t$ and $\theta\in\Theta$, \\
\item[(C4)] $\Pr_0\{P^{T}(X,1)\leq t\} \leq \Pr_0\{P^{MD-T}(X,1)\leq t\}\leq t$ $\forall t$.
\end{enumerate}
\end{theorem}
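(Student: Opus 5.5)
The plan is to reduce all four claims to a pointwise comparison of the two natural $p$-values. The central inequality is
$$P^{MD-T}(x,1)\leq P^T(x,1)\quad\text{for every } x\in\mathcal{X}.$$
Throughout, the MD $p$-value is read with the orientation used in Table \ref{ExampleTable}, where small $R$ is strong evidence against $H_0$. Under that reading $P^{MD-T}(x,1)=\Pr_0\{R(X)\leq R(x)\}$, and $P^T(x,1)=\Pr_0\{T(X)\geq T(x)\}$ from Definition \ref{T}.

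First I will verify that the two $p$-values have these closed forms. For the MD test, $\delta^{M-T}_\alpha(x,1)=1$ iff $\phi^{MD-T}_\alpha(x)=1$. Because $k^*(\alpha),\gamma^*(\alpha)$ are determined by $E_0[\phi_\alpha^{MD}(X)]=\alpha$, this occurs iff $\alpha\geq \Pr_0\{R(X)\leq R(x)\}$. Taking the infimum in Definition \ref{MDT} then gives the stated formula. The argument for $T$ is identical, with $\Pr_0\{T(X)\geq T(x)\}$.

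Next comes the set inclusion. Take any $y$ with $R(y)\leq R(x)$. If $T(y)<T(x)$, then agreement applied to the pair $(x,y)$ gives $R(x)\leq R(y)$, hence $R(x)=R(y)$. Since $R$ is one-to-one, this forces $y=x$, contradicting $T(y)<T(x)$. Therefore $\{y:R(y)\leq R(x)\}\subseteq\{y:T(y)\geq T(x)\}$. Taking $\Pr_0$ of both sets yields the pointwise inequality.

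With the pointwise inequality in hand, the claims follow quickly.
\begin{itemize}
\item[(C3)] $\{P^T(X,1)\leq t\}\subseteq\{P^{MD-T}(X,1)\leq t\}$, so the probability ordering holds under every $\Pr_\theta$.
\item[(C1)] The identity $\delta_\alpha(X,1)=I(P(X,1)\leq\alpha)$ holds a.s. by Theorem 1 of \cite{Habiger2015}, cited before \eqref{valid p-value}. Applying (C3) with $t=\alpha$ then gives (C1).
\item[(C2) and (C4)] The lower inequalities are the special case $\theta=\theta_0$ of (C1) and (C3). The upper bound $\leq\alpha$ (resp. $\leq t$) is \eqref{level alpha}/\eqref{valid p-value} applied to the MD test function. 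That test is a legitimate size-$\alpha$ family: it is nondecreasing and right-continuous in $\alpha$, so the cited results apply.
\end{itemize}
As a direct check, $\Pr_0\{\Pr_0\{R(X)\leq R(x)\}\leq t\}\leq t$ is the standard fact that a discrete "survival-type" transform is stochastically at least uniform. The reason is that the attainable values of $F(r)=\Pr_0\{R\leq r\}$ are cumulative sums, so the largest $r$ with $F(r)\leq t$ has mass $F(r)\leq t$.

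The main obstacle is getting the orientation and the infimum computation right. Definition \ref{MD} writes the $M$-$p$-value with a strict inequality $\Pr_0\{R(X)<R(x)\}+u\Pr_0\{R(X)=R(x)\}$, whose $u=1$ value equals $\Pr_0\{R(X)\leq R(x)\}$. I need to confirm that this coincides with the infimum definition, in particular at boundary $\alpha$ where $\gamma^*(\alpha)=1$ versus $k^*$ jumping. I will handle this by showing monotonicity and right-continuity of $\alpha\mapsto\phi^{MD}_\alpha(x)$ explicitly. The other checks, including the use of the agreement condition and injectivity, are routine.
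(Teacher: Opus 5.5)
Your proof is correct, but it runs in the opposite direction from the paper's.

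\textbf{The paper's route.} It fixes $\alpha$ and shows the rejection regions are nested, $\{\phi^T_\alpha(x)=1\}\subseteq\{\phi^{MD-T}_\alpha(x)=1\}$, by contradiction with the size constraint. A strictly larger $T$-rejection region would contain $x_{(1)},\dots,x_{(r+1)}$ in the $R$-order and force $E_0[\phi^T_\alpha(X)]>\alpha$. (C1) and (C2) then follow by taking expectations and using $\delta_\alpha(x,1)\leq\phi_\alpha(x)$. (C3) and (C4) are read off from the identity $\delta_\alpha(x,1)=I(P(x,1)\leq\alpha)$ of \cite{Habiger2015}.

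\textbf{Your route.} You first prove the pointwise inequality $P^{MD-T}(x,1)\leq P^T(x,1)$ via the inclusion $\{y:R(y)\leq R(x)\}\subseteq\{y:T(y)\geq T(x)\}$. That inclusion uses only agreement and injectivity of $R$, not the size constraint. You then pass to decision functions through the same identity.

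\textbf{What each buys.} Your route is more elementary and gives a slightly stronger conclusion: $\delta^{MD-T}_\alpha(x,1)\geq\delta^T_\alpha(x,1)$ for every $x$ and $\alpha$, so the ordering holds under any distribution of $X$. It also avoids a step the paper leaves implicit, namely that the two $\phi=1$ sets are comparable initial segments of the $R$-order. The paper moves from ``not $\supseteq$'' straight to ``$\subset$''.

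The cost is the bookkeeping you already flag. You must check that the infimum in Definition \ref{MDT} equals $\Pr_0\{R(X)\leq R(x)\}$, including the tie at $\alpha=\Pr_0\{R(X)\leq r\}$, where $(k^*,\gamma^*)=(r,1)$ and $(r+1,0)$ define the same test. This goes through cleanly when every support point has positive null mass, so that $r\mapsto\Pr_0\{R(X)\leq r\}$ is strictly increasing and $\phi^{MD}_\alpha(x)=1$ exactly when $\alpha\geq\Pr_0\{R(X)\leq R(x)\}$. The paper avoids this computation by working with test functions and delegating the link to $p$-values to the cited theorem.
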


\begin{proof}
To prove (C1), if $\{\phi_{\alpha}^{MD-T}(x)=1\}\supseteq  \{\phi_{\alpha}^{T}(x)=1\}$ then $\delta_\alpha^{MD-T}(x,1)\geq \delta_\alpha^{T}(x,1)$ by construction.  Taking expectation of both sides would then prove the result.  To see that $\{\phi_{\alpha}^{MD-T}(x)=1\}\supseteq  \{\phi_{\alpha}^{T}(x)=1\}$, we assume that it is not true and derive a contradiction.  First, note that because $\mathcal{X}$ is countable and because $\phi^{MD-T}$ is minimally discrete, we can enumerate elements of $\{\phi_\alpha^{MD-T}(x) = 1\}$ by $x_{(1)}, x_{(2)}, ..., x_{(r-1)}, x_{(r)}, x_{(r+1)}, ...$ where $r = |\{\phi_\alpha^{MD-T}(x) = 1\}|$. Now, if $\{\phi_{\alpha}^{MD-T}(x)=1\}\subset  \{\phi_{\alpha}^{T}(x)=1\}$ and $T$ and $R$ agree, then $\{\phi_{\alpha}^{T}(x)=1\}$ contains $x_{(1)}, x_{(2)}, ...,x_{(r)}, x_{(r+1)}$.  But this, coupled with the fact that $\phi_\alpha^T(x)\in[0,1]$ implies $E_0[\phi_\alpha^{T}(X)]>\alpha$, which contradicts the assumption that $\phi_\alpha^T$ has size $\alpha$.   

To prove claim (C2), note the first inequality follows from (C1).  For the second inequality, observe that $$\delta_\alpha^{MD-T}(x,1) = I(1\leq \phi_\alpha^{MD-T}(x))\leq \phi_\alpha^{MD-T}(x)$$ by construction. Taking expectation gives $E_0[\delta_\alpha(X,1)]\leq E_0[\phi_\alpha(X)] = \alpha$ where the last equality is due to the supposition that $\phi_\alpha$ is a size-$\alpha$ test. 

As for claims (C3) and (C4), by Theorem 1 in \cite{Habiger2015} $I(P(x,u)\leq \alpha) = \delta_\alpha(x,u)$ for every $x\in\mathcal{X}$, $u\in[0,1]$ and $\alpha\in[0,1]$.  Taking $u=1$, we recover the results from claims (C1) and (C2), respectively.
\end{proof}

Next let us compare MR-$T$ $p$-values to randomized $p$-values based on $T$ that need not be MR.  Claim (C5) stipulates that both $p$-values are uniformly distributed under the null hypothesis.  Claim (C6) states that if $T$ is a sufficient statistic, then both $p$-value statistics are equal in distribution whether the null hypothesis is true or false. This is perhaps not surprising given that most powerful size $\alpha$ decision rules are not unique in discrete testing (recall the Introduction) and both tests are size $\alpha$ tests based on sufficient statistics. However, the MR $p$-value dominates its non-MR counterpart in that the generation of $u$ has ``minimal impact'' on $p(x,u)$.  In particular, Claim (C7) states that the variance attributable to the auxiliary $U$ is smaller for the MR-$T$ $p$-value if $T$ is sufficient. Consequently, variation from replication to replication that is attributable to generation of $U$ is minimized.  In this sense, we may say that the replicability of $p^{MR-T}(X,U)$ is greater.        
\begin{theorem}[MR-$T$ versus $T$: Randomized p-values and decision functions] \label{Thm2}
Let decision functions and p-values be defined as Definitions \ref{T} and \ref{MDT} with $U$ being uniformly distributed over the unit interval and independent of $X$. Then claim (C5) below is true. If additionally $T$ is a sufficient statistic, then claim (C6) and (C7) are true: 
\begin{enumerate}
\item[(C5)] $E_0[\delta_\alpha^{MR-T}(X,U)] = E_0[\delta_\alpha^{T}(X,U)] = \alpha$ $\forall \alpha\in[0,1]$ and hence \\
$Pr_0\{\Pr(P^{MR-T}(X,U)\leq t\} = \Pr_0\{P^{T}(X,U)\leq t\} = t$ $\forall t \in [0,1]$,\\
\item[(C6)] $E_\theta[\delta_\alpha^{MR-T}(X,U)] =  E_\theta[\delta_\alpha^T(X,U)]$ $\forall \alpha\in[0,1],\theta\in\Theta$ \\
and hence $Pr_\theta\{\Pr(P^{MR-T}(X,U)\leq t\} = \Pr_\theta\{P^{T}(X,U)\leq t\}$ $\forall t\in[0,1], \theta\in\Theta$, \\
\item[(C7)] $Var_\theta(P^{MD-T}(X,U)|X)\leq Var_\theta(P^{T}(X,U)|X)$ $\forall \theta\in\Theta$.
\end{enumerate}
\end{theorem}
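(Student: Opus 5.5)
The plan is to argue through the explicit linear forms of the two generalized $p$-values. For the MR-$T$ $p$-value, the order ``$R(X)<R(x)$'' in Definition \ref{MD} corresponds to ``better than $x$'', so
$$P^{MR-T}(x,u)=\Pr_0\{R(X)<R(x)\}+u\,p_0(x),$$
while
$$P^T(x,u)=\Pr_0\{T(X)>T(x)\}+u\Pr_0\{T(X)=T(x)\}.$$
Both are of the form $a(x)+ub(x)$.

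\textbf{(C5).} Both $\phi^{MD-T}_\alpha$ and $\phi^T_\alpha$ are size $\alpha$ test functions by construction. Hence \eqref{size alpha} gives $E_0[\delta_\alpha(X,U)]=\alpha$ for each of them. Theorem 2.3 of \cite{Habiger2011} gives $\delta_\alpha(X,U)=I(P(X,U)\le\alpha)$ almost surely, so \eqref{valid p-value} yields uniformity of both $p$-values under $H_0$. This is routine.

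\textbf{(C6).} Use sufficiency. If $T$ is sufficient, the conditional law of $X$ given $T(X)=t$ does not depend on $\theta$. The level set $\{T=t\}$ is an interval of consecutive $R$-ranks because $R$ agrees with $T$. Conditional on $T(X)=t$ and under any $\theta$, the pair $(X,U)$ gives
$$P^{MR-T}(X,U)=\Pr_0\{T(X)>t\}+\Pr_0\{T(X)=t\}\,V,\qquad V=\frac{\Pr_0\{T(X)=t,\ R(X)<R(X)_{\text{obs}}\}+U\,p_0(X)}{\Pr_0\{T(X)=t\}}.$$
Because the conditional distribution of $X$ given $T=t$ is the same under every $\theta$, it equals the one under $\theta_0$. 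Under $\theta_0$, and by the same argument as (C5) applied within the level set (the ranked points tile $[0,1]$ with lengths $p_0(x)/\Pr_0(T=t)$, and $U$ fills each tile uniformly), $V\sim$ Uniform$(0,1)$ independently of $T$.

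For $P^T$ we have the same representation with $V=U$. So, conditional on $T(X)$, the two $p$-values have the same distribution under every $\theta$. Integrating over $T(X)$ gives equality in distribution, hence equal CDFs. Equality of the decision function expectations then follows from $\delta_\alpha=I(P\le\alpha)$.

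\textbf{(C7).} Conditional on $X$, the only randomness is $U$. Hence
$$Var_\theta(P^{M-T}(X,U)\mid X)=p_0(X)^2/12,\qquad Var_\theta(P^T(X,U)\mid X)=\Pr_0\{T(X')=T(X)\}^2/12.$$
Since $\{X\}\subseteq\{x':T(x')=T(X)\}$, we get $p_0(X)\le\Pr_0\{T=T(X)\}$ pointwise, which gives the inequality. Sufficiency is not actually needed for this step.

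The main obstacle is (C6). It requires showing carefully that the within-level-set variate $V$ is uniform and independent of the level-set index under every $\theta$. This needs sufficiency, so that the conditional law of $X$ given $T$ is the $\theta_0$ one, together with the consecutive-rank property from agreement of $R$ with $T$. I would first verify the tiling claim on a single level set and then average.
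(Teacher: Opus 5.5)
Your (C5) and (C7) arguments are the paper's. In (C7) you use the correct $\Pr_0$ coefficients where the paper writes $\Pr_\theta$, and you rightly note that sufficiency plays no role there.

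For (C6) you take a different route. The paper works with test functions and shows $E_0[\phi^{MD-T}_\alpha(X)\mid T(X)]=\phi^T_\alpha(X)$. Agreement forces $\phi^{MD-T}_\alpha=1$ on $\{T>k(\alpha)\}$ and $0$ on $\{T<k(\alpha)\}$. The paper then pins down the conditional mean on the boundary set $\{T=k(\alpha)\}$ indirectly, through the size constraint. Sufficiency transfers this to every $\theta$, and the $p$-value claim follows from $\delta_\alpha=I(P\le\alpha)$.

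You work directly with the $p$-value instead. Your within-level-set tiling shows that, given $T(X)=t$, $P^{MR-T}(X,U)$ is uniform on $[\Pr_0\{T>t\},\Pr_0\{T\ge t\}]$, exactly like $P^T(X,U)$. You then deduce the decision-function claim. Because $E_\theta[\delta_\alpha\mid T]=\Pr_\theta\{P\le\alpha\mid T\}$, the two conditional identities are equivalent once required for all $\alpha$, so neither is logically stronger.

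What each route buys:
\begin{itemize}
\item Yours is an explicit computation. It avoids the bookkeeping about where the MD randomization rank $k^*(\alpha)$ sits relative to $\{T=k(\alpha)\}$.
\item Your explicit conditional law also pays off elsewhere. Taking conditional means gives $E_0[P^{MD-T}(X,1/2)\mid T(X)]=P^T(X,1/2)$. Conditional Jensen then gives the first inequality of (C9) in one line, without sufficiency and without the rectangle computations of the Appendix.
\item The paper's test-function identity is what it reuses directly for the martingale characterization in (C8).
\end{itemize}

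One small point: your claim that each level set of $T$ is an interval of $R$-ranks needs agreement in the strict sense $T(x)>T(y)\Rightarrow R(x)<R(y)$. The literal definition $T(x)\ge T(y)\Rightarrow R(x)\le R(y)$ would force $R(x)=R(y)$ on ties, contradicting injectivity. So the strict reading is the one both you and the paper are implicitly using.
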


\begin{proof}
For (C5), the result holds since both tests are size $\alpha$ by construction and the second result follows from Theorem 1 in \cite{Habiger2012}, which stipulates that for all $\theta\in\Theta$ and $\alpha \in[0,1]$ we have $I(P(X,U)\leq \alpha) = \delta_\alpha(X,U)$ for all $\alpha$ with probability 1 since $p$-values are defined as in Definitions \ref{T} and \ref{MDT}. 

To prove the first claim in (C6), we show that 
\begin{eqnarray*}
E_\theta[\delta_\alpha^{MD-T}(X,U)|T(X)] &=& E_0[\delta_\alpha^{MD-T}(X,U)|T(X)]\\
&=& E_0[\delta_\alpha^T(X,U)|T(X)]\\
&=& E_\theta[\delta_\alpha^T(X,U)|T(X)]
\end{eqnarray*}
and the result follows by the law of iterated expectation. Note that the first and third equalities immediately follow from the sufficiency principle since the distribution of the decision function cannot depend on $\theta$ conditionally upon $T(X)$. For the second equality, it suffices to show that $$E_0[\phi^{MD-T}_\alpha(X)|T(X)] = I(T(X)>k(\alpha)) + \gamma(\alpha)I(T(X)=k(\alpha))$$ since the right hand side is $\phi_\alpha^T(X)$. First, observe that if $T(x)>k(\alpha)$ then $R(x)<k^*(\alpha)$ since the tests agree and consequently $E_0[I(R(X)<k^*(\alpha))|T(X)>k(\alpha)] = I(T(X)>k(\alpha))$. Likewise, $T(x)<k(\alpha)$ implies $R(x)>k^*(\alpha)$.  Now, if $T(X)=k(\alpha)$, then it must be the case that $E_0[\phi_\alpha^{MD-T}(X)|T(X)=k(\alpha)] = \gamma(\alpha)$; otherwise $E_0[E_0[\phi_\alpha^{MD-T}(X)|T(X)]] \neq \alpha$. This completes the proof of the first claim in (C6).  The second claim follows from Theorem 2.3 in \cite{Habiger2011}, which stipulates $I(P(X,U)\leq \alpha) = \delta_\alpha(X,U)$ with probability 1 for $p$-values defined as in Definitions \ref{T} and \ref{MDT}. 

To prove claim (C7), observe for a fixed $x\in \mathcal{X}$, $0\leq \Pr_\theta\{R(X)=R(x)\}\leq \Pr_\theta\{T(X)=T(x)\}$ because $$\textrm{Pr}_\theta\{T(X)=T(x)\} = \sum_{x\in A} \textrm{Pr}_\theta\{R(X)=R(x)\}$$ where $A = [T(X)=T(x)]$.  Hence, \begin{eqnarray*}
Var_\theta(P^{MD-T}(x,U)|X=x) &=& Var(U)[\textrm{Pr}_\theta\{R(X)=R(x)\}]^2\\
&\leq& Var(U)[\textrm{Pr}_\theta\{T(X)=T(x)\}^2.
\end{eqnarray*}
\end{proof}

Finally, let us compare mid-$p$-values based on $T$ to their MD counterparts.  Recall mid-p-values are bounded by a uniform variate in the convex order under $H_0$, i.e. $E_0[h(P^T(X,1/2))]\leq E_0[h(U)]$ for all convex $h$. We can verify that sharper inequalities are available, i.e. $E_0[h(P^T(X,1/2))]\leq E_0[h(P^{MD-T}(X,1/2))]\leq E_0[h(U)]\leq E_0[h(U)]$.  Graphically, the result is understood in Figure \ref{MidPplot}, where we see the cumulative distribution of MD-$T$ mid-$p$-value statistic is sometimes closer to the 45 degree line and never further. The result is formally stated in claim (C9).  Claim (C8) provides the analogous convex ordering results for $T$ and $MD-T$ test functions.  Again, the claims require $T$ be a sufficient statistic, though relaxing the assumption may be possible.  
   
\begin{theorem}[MD-$T$ versus $T$: Mid p-values and test functions] \label{Thm3}
Let $\phi_\alpha^T(X)$, $\phi_\alpha^{MD-T}(X)$, $P^{T}(X,1/2)$ and $P^{MD-T}(X,1/2)$ be test functions and mid-p-values defined as in Definitions \ref{T} and  \ref{MDT}. If $T$ is sufficient then the following claims are true for any convex $h$:
\begin{enumerate}
\item[(C8)] $E_0[h(\phi_\alpha^T(X))]\leq E_0[h(\phi_\alpha^{MD-T}(X))]$ $\forall \alpha \in [0,1]$.  \\
\item[(C9)] $E_0[h(P^T(X,1/2))]\leq E_0[h(P^{MD-T}(X,1/2)]\leq E_0[h(U)].$
\end{enumerate}
\end{theorem}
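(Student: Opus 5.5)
The approach is a conditional Jensen argument, conditioning on $T(X)$, in the spirit of Lemma \ref{sub-uniform-lemma}. The key identity, established in the proof of (C6) in Theorem \ref{Thm2}, is
$$E_0[\phi_\alpha^{MD-T}(X)\mid T(X)] = \phi_\alpha^T(X).$$
It holds because $R$ agrees with $T$, so $\phi_\alpha^{MD-T}$ equals $1$ (resp. $0$) whenever $T(X)>k(\alpha)$ (resp. $<k(\alpha)$). On $[T(X)=k(\alpha)]$ the conditional mean must equal $\gamma(\alpha)$, otherwise the size constraint fails. With this identity, (C8) follows directly. For convex $h$, conditional Jensen gives
$$E_0[h(\phi_\alpha^{MD-T}(X))] = E_0\big[E_0\{h(\phi_\alpha^{MD-T}(X))\mid T(X)\}\big] \geq E_0\big[h(E_0\{\phi_\alpha^{MD-T}(X)\mid T(X)\})\big] = E_0[h(\phi_\alpha^T(X))].$$
Sufficiency of $T$ is what makes the conditional distribution of $X$ given $T(X)$ well defined and free of $\theta$. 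Strictly, (C8) under $H_0$ only needs conditioning, and sufficiency matters only if one wants the analogue under every $\theta$.

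For (C9), the second inequality $E_0[h(P^{MD-T}(X,1/2))]\leq E_0[h(U)]$ follows from Lemma \ref{sub-uniform-lemma}. Its hypotheses hold for the M-$p$-value: by (C5), $P^{MR-T}(X,U)$ is uniform under $H_0$, and by Definition \ref{MD} it has the linear form $a(X)+Ub(X)$ with $a(x)=\Pr_0\{R(X)<R(x)\}$ and $b(x)=\Pr_0\{R(X)=R(x)\}$. For the first inequality I would show
$$E_0[P^{MD-T}(X,1/2)\mid T(X)] = P^T(X,1/2)$$
and then apply conditional Jensen exactly as for (C8).

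To verify this identity, fix a level set $A=[T(X)=t]$ with $\Pr_0(A)=q>0$. Because $R$ agrees with $T$, the points of $A$ occupy a block of consecutive ranks. Every point with larger $T$ has a smaller rank, and every point with smaller $T$ has a larger rank. So for $x\in A$,
$$P^{MD-T}(x,1/2) = \Pr_0\{T(X)>t\} + \Pr_0\{R(X)<R(x),\, X\in A\} + \tfrac12 p_0(x).$$
Order the points of $A$ by rank as $y_1,y_2,\dots$ with masses $p_i=p_0(y_i)$. Averaging with weights $p_i/q$, the last two terms give
$$\frac1q\sum_i p_i\Big(\sum_{j<i}p_j+\tfrac12 p_i\Big) = \frac1q\cdot\frac12\Big(\sum_i p_i\Big)^2 = \frac q2,$$
using the identity $\sum_{j<i}p_ip_j + \tfrac12\sum_i p_i^2 = \tfrac12(\sum_i p_i)^2$. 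Hence the conditional mean equals $\Pr_0\{T(X)>t\} + \tfrac12\Pr_0\{T(X)=t\} = P^T(x,1/2)$, as required.

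I expect the main obstacle to be this rank-block bookkeeping. One has to justify that agreement forces the ranks within $A$ to be contiguous relative to points outside $A$. Agreement only guarantees $R(x)\le R(y)$ when $T(x)\ge T(y)$, which is enough for the strict comparisons used above. One also has to handle the countably infinite case, where the sum identity requires absolute convergence; this holds because the $p_i$ are nonnegative with finite total $q$. The definitional mismatch in Definition \ref{MD}, where $P^M$ is written with $R(X)<R(x)$, is harmless and consistent with small ranks being most extreme.
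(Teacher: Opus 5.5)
Your proof is correct. For (C8) it is essentially the paper's argument. The paper shows $E_0[\phi_\alpha^{MD-T}(X)\mid \phi_\alpha^T(X)]=\phi_\alpha^T(X)$ and appeals to the martingale characterization of the convex order (Theorem 3.A.4 in \cite{Shaked2007}); the direction used there is exactly your conditional Jensen step, and you simply condition on the finer $T(X)$. Your handling of the second inequality in (C9) via Lemma 1 also matches the paper.

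For the first inequality of (C9), however, your route is genuinely different.
\begin{itemize}
\item The paper uses the integrated-CDF characterization (Theorem 3.A.1). It checks separately that both mid-$p$-values have null mean $1/2$. It then compares $\int_0^s \Pr_0(P\le a)\,da$ for the two statistics one level set at a time, cutting $[\Pr_0(T>k),\Pr_0(T\ge k))$ into equal-width pieces and matching rectangles.
\item You instead prove the single identity $E_0[P^{MD-T}(X,1/2)\mid T(X)]=P^T(X,1/2)$, using $\sum_i p_i(\sum_{j<i}p_j+\tfrac12 p_i)=\tfrac12(\sum_i p_i)^2$, and then reuse conditional Jensen.
\end{itemize}

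Your approach gains several things. It gives one mechanism for both (C8) and (C9). Equality of means comes for free from the tower property. Most importantly, it does not require the null masses within a level set $\{T=k\}$ to be equal. The paper's partition into intervals of common width $\tfrac12 f_R(k^*)=\tfrac14\Pr(T=k)$ tacitly assumes $p_0$ is constant on $\{T=k\}$. That holds in the Bernoulli example but is not implied by sufficiency: for iid Binomial$(m,\theta)$ observations with $m\ge 2$ and $T=\sum_i X_i$, the null masses on a level set differ.

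Your argument also makes clear that sufficiency plays no role in the $H_0$ statements. It only extends both orderings to every $\theta$, via $E_\theta[\,\cdot\mid T(X)]=E_0[\,\cdot\mid T(X)]$. One small correction: sufficiency is not what makes the conditional law given $T(X)$ well defined, since that is automatic for discrete $X$.

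What the paper's route offers in return is a direct geometric picture of why the integrated CDF of the MD mid-$p$-value sits closer to that of the uniform.
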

\begin{proof}
For Claim (C8), we use the Martingale representation of convex ordering in Theorem 3.A.4 in \cite{Shaked2007}.  From this Theorem, it suffices to show $E[\phi^{MD-T}_\alpha(X)|\phi^T_\alpha(X)] = \phi^T_\alpha(X)$.   
The idea is akin to the proof of Claim (C6), but we cannot directly use the sufficiency principle since $\phi^T_\alpha(X)$ is not a sufficient statistic.  However, observe $\phi_\alpha^{T}(x) = 1 $ implies $ [T(x)>k(\alpha)]$ which, because $T$ and $R$ agree, implies $R(x)<k^*(\alpha)$.  Thus, $\phi_\alpha^{MD-T}(x) = 1$ if $\phi_\alpha^{T}(x) =1$.  Similarly, $\phi^T_\alpha(x) = 0$ implies $\phi^{MD-T}_\alpha(x)=0$. Finally, $\phi^{T}_\alpha(x)\in (0,1)$ implies $\phi^T_\alpha(x) = \gamma(\alpha)$ and hence $T(x) = \gamma(\alpha)$ by construction.  The constraint that $E_0[E_0\{\phi^{MD}(X)\}|\phi^T(X)] = \alpha$ ensures that $E_0[\phi^{MD}(X)|T(X) = \gamma(\alpha)] = \gamma(\alpha)$.  Thus,  
$E_0[\phi^{MD-T}_\alpha(X)|\phi_\alpha^T(X)] = \phi_\alpha^{T}(X).$ 

For Claim (C9), using a convex order representation in \cite{Shaked2007}, the proof amounts to verifying that a) the area under the curve and to the left of $s$ for the MD mid-p-value CDF is greater than or equal to the area under the curve and to the left of $s$ for the usual mid-p-value CDF under $H_0$ and that b) both p-values have expectation 1/2 under $H_0$. The former claim is observable in Figure \ref{MidPplot} since the areas of interest are sums of rectangles.  However, the formal proof is notationally complex and therefore relegated to the Appendix.
\end{proof}

\section{Applications and Remarks}

A brief discussion regarding the potential impact of results here is warranted. First consider multiple hypothesis testing. A multiple hypothesis testing (MTP) procedure tests null hypotheses $H_{01},H_{02},..., H_{0M}$ with $p$-values $P_1, P_2, ..., P_M$, which are often assumed to be uniformly distributed under the null hypotheses. Formally, they are defined $\delta_i = I(P_i\leq t)$ for $i = 1, 2, ..., M$ and some $t\in[0,1]$.  For example, a Bonferroni procedure chooses $t = \alpha/M$ while the \cite{Benjamini1995} procedure for False Discovery Rate control chooses $\hat{t} = \sup\{s:\frac{s}{\max\{\sum_i I(P_i\leq s),1\}}\leq \alpha\}$ (with the supremum of the empty set defined to be 0), which can be shown to converge to some constant $t$ as $M\rightarrow \infty$ with probability 1 under weak dependence (cf. \cite{Genovese2002,Storey2004, Habiger2015}).  While procedures are often studied under the assumption that $\Pr_0(P_i\leq t) = t$,  this assumption can be relaxed to $\Pr_0(P_i\leq t)\geq t$. Claim (C4) in Theorem \ref{Thm1} ensures that, if opting for natural $p$-values, the MD-natural $p$-values are still valid since they satisfy the later relaxed assumption.  Claims (C1) and (C3) give that they result in more power.  If opting for randomized $p$-values then both MR $p$-values and their non-minimal counterparts are uniformly distributed under the null hypotheses. If randomized $p$-values are functions of sufficient statistics, then we should not anticipate any more or less power if opting for their MR counterparts (claim (C6)), as the Neyman Pearson Lemma suggests. However, claim (C7) stipulates that fewer discoveries will rely on the generation of $u$ if utilizing the MR $p$-values. To see this, consider Example \ref{BernoulliTestFunction} and consider implementing a level $\alpha = 0.05$ test.  In this example, the decision to reject or retain $H_0$ will depend upon $U$ with probability $\Pr\{T(X) = 4\}$, which is 5 times more likely than its MR counterpart that only utilizes $u$ for one $x\in [T(x)=4]$.    

Theorem 3 has implications for meta analysis with mid-p-values and mid-p-value calibration. For example \cite{Rubin-Delanchy2018} showed that Fisher's method for combining $p$-values is asymptotically conservative when $P_1, P_2, ..., P_M$ are mid-p-values.  That is,
$\Pr_0\left\{-2\sum_{i=1}^n\log(P_i)\geq t_{\alpha,n}\right\}\leq \alpha$ for large enough $n$ where $t_{\alpha,n}$ is a critical value for a chi-squared distribution with $2n$ degrees of freedom and $\Pr_0$ represents a probability computed under the global null (assuming all $H_{0m}$s are true).  This is attributable to the fact $-2\log(x)$ is convex, with convex ordering closed under convolution \citep{Shaked2007}. Hence $-2\sum _{i=1}^n \log(P_i)\leq_{cx}-2\sum _{i=1}^n\log(U_i)$ under the global null when $p$-values are independent.  For $P_i^{MD}$ minimally discrete versions of $P_i$, this inequality is sharpened by claim (C9). Several other Hoeffding-type inequalities were utilized in \cite{Rubin-Delanchy2018} to verify validity of meta analysis methods with mid-p-values, and stand to be sharpened as well. Likewise, \cite{Wang2024} showed that $\Pr(\tilde{P}\leq \alpha)\leq \textrm{e}\times \alpha$
where $\tilde{P} = \exp(\sum_{i=1}^M w_k \log P_i)$ for $P_i$ mid-p-values.  This result also lead to valid meta anlaysis methods, which stand to be improved by claim (C9) via the sharpening of the inequality. Results here may also be useful in the conversion of mid-p-values to E-values \citep{Vovk2021} for downstream analysis.  For example, \cite{Wang2024} demonstrated that many admissible methods for converting p-values to $E$-values \citep{Vovk2021} (ex. $E_i = kP_i^{k-1}$ for $k\in(0,1)$ or $E_i = P_i^{-1/2}-1$) are valid if $P_i$ is a mid-p-value.  Note that such results also rely on convex ordering of mid-p-values.  

As mentioned in the Introduction and throughout this manuscript, most powerful level $\alpha$ test functions are not unique.  In fact, even MD test functions and their corresponding MD and MR p-values are not unique! This means that the selection of specific MD or MR p-values in practice may be based on auxiliary information, prior knowledge, secondary aims, technical variation across experimental units, intuition, subject matter knowledge, or may even be arbitrary. For example, in Table \ref{ExampleTable}, auxiliary information for ranking $x_1$, $x_2$, ..., $x_5$ based on any of the above criteria may be available, which when coupled with the likelihood ratio, would be sufficient for defining $R(x)$.  In other settings, ``less discrete'' and ``less randomized'' p-values may be constructed based on statistical considerations.  For example, one could justify taking $R(x) = 2$ for both $x = (1,1,1,1,0)$ and $x = (0,1,1,1,1)$ and taking $R(x)=3$ for other $x\in [T(x)=4]$ since these two realizations would provide more evidence against an assumption that $X_i$ are independent.  While a detailed exploration of ranking criteria and other types of less discrete or less randomized p-values is beyond our scope, this paper clearly establishes (in Claims (C1) - (C9)) a need to more carefully consider how p-values are constructed, especially when they are inputs for downstream analysis. 
   
\section*{Appendix: Proof of Claim (C9)}

First note that the last inequality in (C9) is established as in equation \eqref{sub-uniform}.  To establish the first inequality, we utilize Theorem 3.A.1 in \cite{Shaked2007}, which states that for $X$ and $Y$ random variables with $E[X] = E[Y]$, $E[h(X)]\leq E[h(Y)]$ for all convex $h$ iff  
$$
\int_{-\infty}^sF(u)du\leq \int_{-\infty}^sG(u)du
$$
for all $s$, where $F$ and $G$ are the cumulative distribution functions for $X$ and $Y$ respectively. It will be illuminating to refer to Figure \ref{MidPplot} in what follows.  

First, let us adopt some simplifying notation and verify that both mid-p-values have mean 1/2. Denote the CDF of $T(X)$ under $H_0$ by $F_T$ and corresponding pmf by $f_T$, and likewise denote the CDF of $R$ by $F_R$ and pmf by $f_R$.  Further denote the mid-p-values in equation \eqref{MD} and \eqref{MDT} by $P^T = P^T(X,1/2)$ and $P^R = P^{MD}(X,1/2)$. Finally, we write $\Pr$ for $\Pr_0$ for brevity.  Observe $\Pr\left\{P^T = \bar{F}(t) + 0.5f_T(t)\right\} = f_T(t)$ for $\bar{F}(t) = 1-F(t)$ and $\Pr\left\{P^R = F_R(r-1) + 0.5 f_R(r)\right\} = f_R(r)$.  Writing $F_R(r-1) + 0.5 f_R(r) = \frac{1}{2}[F_R(r-1) + F_R(r)]$ and $f_R(r) = F_R(r) - F_R(r-1)$ we see that 
$$E[P^R] = \sum_r [F_R(r-1) + 0.5 f_R(r)] f_R(r) = \frac{1}{2}\sum_r [F_R(r)^2 - F_R(r-1)^2] = 1/2$$
where the last equality is due to $F_R(0)=0$, $\sum_{r=1}^N[F(r)^2 - F(r-1)^2] = F(N)^2$ and $\lim_{N\rightarrow\infty}F(N)=1$.  Similar arguments can be used to verify $E[P^T] = 1/2$ under $H_0$. 

To see that $\int_0^s\Pr(P^T\leq \alpha)d\alpha\leq \int_0^s\Pr(P^R\leq \alpha)d\alpha$, we shall show that the integrals are equal over the region $[\Pr(T>k),\Pr(T\geq k))$ for all $k$ and with ``$\leq$'' occurring over $[\Pr(T>k),s)$ if $[\Pr(T>k)< s<\Pr(T\geq k))$. To see that the integrals are equal partition
$
[\Pr(T>k), \Pr(T\geq k))\equiv A^-(k)\bigcup A^+(k)
$
into two intervals of width $0.5\Pr_0(T=k)$ so that $A^-(k)  = [\Pr(T>k), \Pr(T>k) + 1/2 \Pr(T=k))$ and  
$A^+(k) = [\Pr(T>k) + 1/2 \Pr(T=k), \Pr(T\geq k))$.  
Observe $P^T = \Pr(T>k)$ on $A^-(k)$ and $P^T = \Pr(T\geq k)$ on $A^+(k)$ so that
$\int_{A^-(k)\cup A^+(k)} \Pr(P^T\leq \alpha)d\alpha = 0.5\Pr(T>k) + 0.5 \Pr(T\geq k)$.   
If $\{R(x): T(x)=k\}$ has cardinality 1 then $P^T=P^R$ on $A^-(k)\cup A^+(k)$ since $T$ and $R$ agree and hence the integrals are equal.  Now suppose $T(X) = k$ for two values of $R$. Then because $T(X)$ is sufficient and because $T(X)$ and $R(X)$ agree, we can find a $k^*$ satisfying $\Pr\left\{T(X)>k\right\} = \Pr\left\{R(X)<k^*\right\}$ and $\Pr\left\{T(X)\geq k^*\right\} = \Pr\left\{R(X)\leq k^*+1\right\}$ and can partition $[\Pr(T>k),\Pr(T\geq k))$ into 4 equal width intervals, each with width $1/2f_R(R=k^*) = 1/2f_R(k^*+1) = 1/4\Pr(T = k)$. See, for example, Figure \ref{MidPplot}.  Formally,  
$A^-(k)\cup A^+(k) = B^-(k^*)\cup B^+(k^*) \cup B^-(k^*+1) \cup B^+(k^*+1)$ for
\begin{eqnarray*}
B^-(k^*) &=& [F_R(k^*-1), F_R(k^*-1)+ 0.5f_R(k^*))\\
B^+(k^*) &=& [F_R(k^*-1)+ 0.5f_R(k^*), F_R(k^*)) \\
B^-(k^*+1) &=& [F_R(k^*), F_R(k^*)+ 0.5f_R(R=k^*+1)) \\
B^+(k^*+1) &=& [F_R(k^*) + 0.5 f_R(k^*+1), F_R(k^*+1))
\end{eqnarray*}
and $\Pr(P^R\leq \alpha)$ is $F_R(k^*-1)$, $F_R(k^*)$, $F_R(k^*), F_R(k^*+1)$ on $B^-(k^*),..., B^+(k^*+1)$, respectively.  Thus, the integral is 4 equal width rectangles with average height $1/2 [\Pr(T>k) + \Pr(T\geq k)]$.  More formally, computing the integral and then adding and subtracting $f_r(k^*)$ to $F_R(k^*)$ to get $F_R(k^*+1)$ and $F_R(k^*-1)$ gives
\begin{eqnarray*}
\int_{A^-(k)\cup A^+(k)}&&\hspace{-.3in}\Pr(P^R\leq \alpha)d\alpha\\
&=& 1/2 f_R(k^*)[F_R(k^*-1)+F_R(k^*) + F_R(k^*) + F_R(k^*+1)] \\
&=& 1/2 f_R(k^*)[2F_R(k^*-1) + 2F_R(k^*+1)]\\
&=& f_R(k^*)[F_R(k^*-1) + F_R(k^*+1)]\\
&=& 1/2 f_T(k)[F_R(k^*-1) + F_R(k^*+1)]\\
&=& 1/2 f_T(k)[\Pr(T>k) + \Pr(T\geq k)] \\
&=&\int_{A^-(k)\cup A^+(k)}\Pr(P^T\leq \alpha)d\alpha.
\end{eqnarray*}
Now if $T(x)=k$ for $j\geq 2$ values of $R$ then $\Pr[\Pr(T>k),\Pr(T\geq k))$ can be partitioned into 2$j$ intervals with equal width $1/2f_R(k^*)$, with $\Pr(P^R\leq \alpha)$ having average height $1/2 [\Pr(T>k) + \Pr(T\geq k)]$. Hence, we always have
$\int_{A^-(k)\cup A^+(k)}\Pr(P^R\leq \alpha)d\alpha = \int_{A^-(k)\cup A^+(k)}\Pr(P^T\leq \alpha)d\alpha$. 

Now for the region $[\Pr(T>k),s)$, we observe that there are $j$ rectangles in $\int_{A^-(k)}[\Pr(P^R\leq \alpha) - \Pr(P^T\leq \alpha)]d\alpha$ with width $1/2f_R(k^*)$ and height $\Pr(R<j) - \Pr(T>k)$ for $j=k^*, k^*+1,..$. Further, $\int_{A^+(k)}[\Pr(P^T\leq \alpha) - \Pr(P^R\leq \alpha)]d\alpha$ is composed of an identical collection of rectangles.  Therefore, for $s\in A^-(k)$
\begin{eqnarray*}
\int_{[\Pr(T>k),s))}[\Pr(P^R\leq \alpha) -\Pr(P^T\leq \alpha)]d\alpha\geq 0.
\end{eqnarray*}
and for $s\in A^+(k)$
\begin{eqnarray*}
&&\int_{[\Pr(T>k),s))}[\Pr(P^R\leq \alpha) -\Pr(P^T\leq \alpha)]d\alpha \\
&&=\int_{A^-(k)}[\Pr(P^R\leq \alpha) -\Pr(P^T\leq \alpha)]d\alpha \\
&&\hspace{.3in} - \int_{[\Pr(T>k)+1/2f_T(k), s)}[\Pr(P^T\leq \alpha) -\Pr(P^R\leq \alpha)]d\alpha \\
&&\geq 
\int_{A^-(k)}[\Pr(P^R\leq \alpha) -\Pr(P^T\leq \alpha)]d\alpha \\
&&\hspace{.3in} - \int_{A^+(k)}[\Pr(P^T\leq \alpha) -\Pr(P^R\leq \alpha)]d\alpha \\
&& = 0
\end{eqnarray*}
Thus, for any $s$ we can compute (for some $k$)
\begin{eqnarray*}
\int_0^s\Pr(P^R\leq \alpha)d\alpha &-& \int_0^s\Pr(P^T\leq \alpha)d\alpha\\
&&\hspace{-.5in}= \int_{\Pr(T>k)}^s\Pr(P^R\leq \alpha)d\alpha - \int_{\Pr(T>k)}^s\Pr(P^T\leq \alpha)d\alpha \geq 0.
\end{eqnarray*}
That is, $\int_0^s\Pr(P^R\leq \alpha)d\alpha\geq \int_0^s\Pr(P^T\leq \alpha)d\alpha$ and $E_0[P^R]=E_0[P^T]$.  This completes the proof. 

\bibliographystyle{chicago}
\bibliography{multipletesting}
\end{document}